\begin{document}

\title{Obvious Independence of Clones \\ \small Extended Abstract}

\titlerunning{Obvious Independence of Clones}
\author{Ratip Emin Berker\inst{1}
\and
Sílvia Casacuberta\inst{2}
\and
Isaac Robinson\inst{2} 
\and
Christopher Ong\inst{3}
}

\authorrunning{Berker et al.}
\institute{Foundations of Cooperative AI Lab (FOCAL), Carnegie Mellon University \and
University of Oxford
 \and
Harvard University\\
}

\maketitle              

\begin{abstract}
The Independence of Clones (IoC) criterion measures a voting rule's robustness to strategic nomination. Prior literature has established empirically that individuals may still submit costly, distortionary misreports even in strategy-proof (SP) settings, due to failure to recognize the SP property. The intersection of these issues motivates the search for mechanisms that are \textit{Obviously Independent of Clones (OIoC)}: where strategic nomination/exiting of clones obviously has no effect on the outcome. We construct a formal and intuitive definition of a voting rule being OIoC and examine five IoC rules to identify whether they satisfy OIoC. 
\end{abstract}

\section{Introduction}\label{sec:intro}
\textit{How can we prevent similar candidates in an election splitting the vote, leading to a less desirable winner}? One answer is to require the voting rule to satisfy the \textit{Independence of Clones (IoC)} criterion, which ensures that the addition or removal of a candidate with similar policy inclinations to others will not spoil the election \cite{tideman87}. While IoC rules  have been well studied in the computational social choice literature \cite{ElkindFaliszewskiSlinko2010,FreemanBrillConitzer2014,ParkesXia2012,tideman87,tideman:CompletIoC}, it is not clear that the average voter or candidate can be easily convinced that any such rule is in fact IoC, resulting, for example, in a candidate unnecessarily dropping out of the race, either out of fear of hurting their party, or of being blamed by their voters for doing so.
    
    As the benefits of a property are sometimes only accrued when agents believe it is satisfied, we turn to examining the \textit{obviousness} of a property. Li \cite{li17} first defined \textit{Obvious Strategy-Proof} mechanisms, which have since been studied in a variety of contexts \cite{ashlagi2018stable,ferraioli2017obvious,gs21}. As we will see, our notion of obviousness for IoC is inspired from the model of primary elections \cite{primaries}, which occurs within each political party (a practical approximation of a clone set) to decide on a joint candidate for a general election.

\section{Preliminaries}\label{sec:pre}

\noindent\textbf{Model}
 Given a finite set of \textit{voters} $N=[n]$ and \textit{candidates} $A=\{a_i\}_{i\in [m]}$, each $i \in N$ has a strict \textit{ranking} $\sigma_i$ over $A$. A \textit{preference profile} $\boldsymbol{\sigma}$ consists of all voters' rankings. A \textit{voting rule} is a function that maps $\boldsymbol{\sigma}$ to a subset of $A$, the winner(s) of the election.

\begin{definition}[Independence of Clones \cite{tideman87}]\label{def:clone_set} A non-empty subset of candidates, $K \subseteq A$, is a \textbf{set of clones} with respect to $\boldsymbol{\sigma}$ if no voter ranks any candidate outside of $K$ between any two elements of $K$. We say a voting rule is \textit{Independent of Clones (IoC)} if:
\begin{enumerate}
    \item A candidate that is a member of a set of clones wins if and only if some member of that set of clones wins after one of its clones is eliminated from the original ballot.
    \item A candidate that is \textit{not} a member of a set of clones wins if and only if that candidate wins after any clone is eliminated from the original ballot.
    
\end{enumerate}

\end{definition}
Intuitively, a rule satisfying IoC ensures that the winner of an election does not change due to the addition of a candidate who is similar to an existing non-winning candidate. 
\\~\\
\noindent\textbf{Voting Rules Considered}\label{sec:functions} We study five existing IoC rules, the definitions of which we provide in the full paper: \textit{Single Transferable Vote (STV)}, \textit{Ranked Pairs (RP) \cite{tideman87}}, \textit{the Schulze method \cite{schulze10}}, \emph{Schwartz rule \cite{debrajat}}, and \textit{Smith Alternative Vote (SAV) \cite{green2011four}}.

\section{Obvious Independence of Clones (OIoC) and Results}\label{sec:oioc}
  
Before introducing the definition of OIoC, we first introduce two novel concepts:

  \begin{definition}\label{def:clone_grouping}
Given a preference profile $\boldsymbol{\sigma}$, a set of sets $\mathcal{K}=\{K_1,K_2,\ldots,K_\ell\}$ where $K_i \subseteq A$ for all $i\in [\ell]$ is a \textbf{clone partition with respect to} $\boldsymbol{\sigma}$ if (1) $\mathcal{K}$ is a disjoint partitioning of $A$, and (2) each $K_i$ is a non-empty clone set with respect to $\boldsymbol{\sigma}$.
\end{definition}
\begin{definition}[GLOC]\label{def:gloc}
    Given any voting rule $f$, $\boldsymbol{GLOC_f}$ is a function that takes as input a preference profile $\boldsymbol{\sigma}$ and a clone partition $\mathcal{K}$ with respect to $\boldsymbol{\sigma}$ and performs:
    
    \smallskip
        \noindent \textit{1. \textbf{GLO}bal step:} Given $\sigma_i$, say $\sigma^\mathcal{K}_i$ is the corresponding ranking over $\mathcal{K}$. Treating $\mathcal{K}$ as the set of candidates, compute $f\left(\boldsymbol{\sigma}^\mathcal{K}\right) \equiv f\left(\{{\sigma}_i^\mathcal{K}\}_{i \in N}\right)$ to get the `winner clone sets'.
        
             \smallskip
        \noindent \textit{2. \textbf{LOC}al step:} For each $K \in f\left(\boldsymbol{\sigma}^\mathcal{K}\right)$, say $\boldsymbol{\sigma}^{K}$ is $\boldsymbol{\sigma}$ with $A\setminus K$ removed. Compute $f(\boldsymbol{\sigma}^K)$ and output the union over all $K$ such, i.e., $GLOC_f(\boldsymbol{\sigma}, \mathcal{K}) = \bigcup_{K \in f\left(\boldsymbol{\sigma}^\mathcal{K}\right)  }f(\boldsymbol{\sigma}^K)$
\end{definition}

Intuitively, GLOC first runs the input voting rule $f$ on the clone sets (as specified by $\mathcal{K}$), `packing' the candidates in each set to treat it as a candidate. It then `unpacks' the clones within each winner clone set, and runs $f$ once again among them. To demonstrate GLOC, we illustrate the protocol when applied to the setting of plurality voting $f_{plur}$, which simply picks the candidate who is the top choice for the most voters. 

\begin{figure}[h!]
\centering
 \begin{tabular}{|c|c|c|c|c|c|c|}
  \hline
 Vot.1&Vot.2&Vot.3&Vot.4&Vot.5&Vot.6&Vot.7\\ \hline 
  \cellcolor{yellow!25}{a} & \cellcolor{yellow!25}{a} &\cellcolor{blue!25}{b} & \cellcolor{blue!25}{b} & \cellcolor{green!25}{c} & \cellcolor{green!25}{c} & \cellcolor{green!25}{c}\\
  \hline
   \cellcolor{blue!25}{b} &  \cellcolor{blue!25}{b} &\cellcolor{yellow!25}{a} & \cellcolor{yellow!25}{a} & \cellcolor{red!25}{d} & \cellcolor{red!25}{d} & \cellcolor{red!25}{d} \\
  \hline
     \cellcolor{green!25}{c} & \cellcolor{red!25}{d} & \cellcolor{green!25}{c} & \cellcolor{red!25}{d} &  \cellcolor{yellow!25}{a} & \cellcolor{yellow!25}{a} & \cellcolor{blue!25}{b} \\
  \hline
     \cellcolor{red!25}{d} & \cellcolor{green!25}{c}  & \cellcolor{red!25}{d} & \cellcolor{green!25}{c} & \cellcolor{blue!25}{b} & \cellcolor{blue!25}{b} & \cellcolor{yellow!25}{a} \\
    \hline
\end{tabular}

    \captionsetup{width=.8\linewidth}
    \caption{An example preference profile $\boldsymbol{\sigma}$. Column Vot.$i$ shows $\sigma_i$ for Voter $i$. }
    \label{fig:GLOC_Example_1}
\end{figure}
Consider $\boldsymbol{\sigma}$ from Figure \ref{fig:GLOC_Example_1}.  We have $f_{plur}(\boldsymbol{\sigma})=c$. Notice $\mathcal{K}=\{\{a,b\}, \{c,d\}\}$ is a valid clone partition with respect to $\boldsymbol{\sigma}$. Accordingly, GLOC maps $\{a,b\}$ and $\{c,d\}$ to the meta-candidates $K_1$ and $K_2$, respectively. As demonstrated in Figure \ref{fig:GLOC_Example_2}, we have $f_{plur}(\boldsymbol{\sigma}^\mathcal{K})=K_1$. and  $f_{plur}(\boldsymbol{\sigma}^{K_1})=a$, implying $GLOC_{f_{plur}}(\boldsymbol{\sigma}, \mathcal{K}) =a$.

\begin{figure}[h!]
\centering

 \begin{tabular}{|c|c|c|c|c|c|c|}
  \hline Vot.1&Vot.2&Vot.3&Vot.4&Vot.5&Vot.6&Vot.7\\ \hline 
  \cellcolor{cyan!25}{$K_1$} & \cellcolor{cyan!25}{$K_1$} &\cellcolor{cyan!25}{$K_1$} & \cellcolor{cyan!25}{$K_1$} & \cellcolor{brown!25}{$K_2$} & \cellcolor{brown!25}{$K_2$} & \cellcolor{brown!25}{$K_2$}\\
  \hline
     \cellcolor{brown!25}{$K_2$} & \cellcolor{brown!25}{$K_2$} & \cellcolor{brown!25}{$K_2$} & \cellcolor{brown!25}{$K_2$} &  \cellcolor{cyan!25}{$K_1$} & \cellcolor{cyan!25}{$K_1$} & \cellcolor{cyan!25}{$K_1$} \\
    \hline
\end{tabular}
\quad \quad
 \begin{tabular}{|c|c|c|c|c|c|c|}
  \hline
 
  Vot.1&Vot.2&Vot.3&Vot.4&Vot.5&Vot.6&Vot.7\\ \hline 
  \cellcolor{yellow!25} a & \cellcolor{yellow!25}  a &\cellcolor{blue!25} b & \cellcolor{blue!25} b &  \cellcolor{yellow!25} a & \cellcolor{yellow!25} a & \cellcolor{blue!25} b\\
  \hline
   \cellcolor{blue!25} b &  \cellcolor{blue!25} b &\cellcolor{yellow!25} a & \cellcolor{yellow!25} a & \cellcolor{blue!25} b & \cellcolor{blue!25} b & \cellcolor{yellow!25} a \\
  \hline
\end{tabular}

    \captionsetup{width=.8\linewidth}
    \caption{(Left) $\boldsymbol{\sigma}^\mathcal{K}$, where the clone sets are condensed into singular candidates $K_1$ and $K_2$. (Right) $\boldsymbol{\sigma}^{K_1}$, where each $\sigma_i$ is limited to the  members of $K_1$.}
    \label{fig:GLOC_Example_2}
\end{figure}

Having defined Clone Partitions and GLOC, we now formally introduce OIoC.
  \begin{definition}\label{def:oioc}
      A voting rule $f$ is \textbf{Obviously Independent of Clones (OIoC)} if for all preference profile $\boldsymbol{\sigma}$ and all clone partitions $\mathcal{K}$ w.r.t. $\boldsymbol{\sigma}$, we have $f(\boldsymbol{\sigma}) =GLOC_f(\boldsymbol{\sigma}, \mathcal{K})$
\end{definition}

The example from Figures \ref{fig:GLOC_Example_1} and \ref{fig:GLOC_Example_2} demonstrate that plurality is not OIoC, which is not surprising, considering the rule is not even IoC (as having a clone will split your plurality votes). In the full version of the paper, we formalize this hierarchy:
\begin{proposition}
    OIoC implies IoC.
\end{proposition}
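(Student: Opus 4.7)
The plan is to rely exclusively on Conditions 1 and 4 of the clocked election; Conditions 2 and 3 constrain the implementation rather than the input/output behavior of $f$, so they should play no role in this argument. Let $CE$ be a clocked election realizing $f$, and for any profile $\boldsymbol{\tau}$ over $m'$ candidates write $F(\boldsymbol{\tau})=(F_0,\dots,F_{m'-1})$ with $F_0=\emptyset$ and $|F_i|=i$ (each iteration eliminates exactly one candidate), so that Condition 4 gives $\{f(\boldsymbol{\tau})\}=A\setminus F_{m'-1}$. I would first establish a \emph{bulk removal} identity: for every clone set $K\subseteq A$ in $\boldsymbol{\sigma}$ and every $d\in K$,
\[
 f\bigl(\boldsymbol{\sigma}\setminus(K\setminus\{d\})\bigr)=\begin{cases} d & \text{if } f(\boldsymbol{\sigma})\in K,\\ f(\boldsymbol{\sigma}) & \text{if } f(\boldsymbol{\sigma})\notin K.\end{cases}
\]

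To prove the identity, I would apply Condition 1 at $i=m-1$, with $F'$ the list produced by $CE$ on $\boldsymbol{\sigma}\setminus(K\setminus\{d\})$. If $f(\boldsymbol{\sigma})\in K$, the winning clone is the unique element absent from $F_{m-1}$, so case (a) fires and $F_{m-1}\setminus K = F'_j$. A cardinality count gives $|F_{m-1}\setminus K|=(m-1)-(|K|-1)=m-|K|$, forcing $j=m-|K|$, which is the final index for $F'$; Condition 4 then identifies the winner of the reduced profile as $((A\setminus K)\cup\{d\})\setminus(A\setminus K)=d$. If instead $f(\boldsymbol{\sigma})\notin K$, all of $K$ lies in $F_{m-1}$, so case (b) fires and the stated substitution yields $F'_{m-|K|}=((A\setminus K)\setminus\{f(\boldsymbol{\sigma})\})\cup\{d\}$; applying Condition 4 now picks out $f(\boldsymbol{\sigma})$ itself.

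With the bulk identity in hand, I would derive IoC as stated in Definition~\ref{def:ioc}, which concerns removing a \emph{single} clone $e\in K$ to form $\boldsymbol{\sigma}':=\boldsymbol{\sigma}\setminus\{e\}$. If $|K|=2$, the unique $d\in K\setminus\{e\}$ satisfies $\boldsymbol{\sigma}\setminus(K\setminus\{d\})=\boldsymbol{\sigma}'$, so the identity is IoC verbatim. If $|K|\ge 3$, then $K':=K\setminus\{e\}$ remains a clone set in $\boldsymbol{\sigma}'$ (any ``interposition'' of a candidate from $A\setminus K$ between two elements of $K'$ in $\boldsymbol{\sigma}'$ would also occur in $\boldsymbol{\sigma}$, contradicting that $K$ is a clone set there), and for any $d\in K'$ one checks $\boldsymbol{\sigma}'\setminus(K'\setminus\{d\})=\boldsymbol{\sigma}\setminus(K\setminus\{d\})$. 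Applying the bulk identity to both $(\boldsymbol{\sigma},K,d)$ and $(\boldsymbol{\sigma}',K',d)$ and equating the two right-hand sides yields $f(\boldsymbol{\sigma})\in K \iff f(\boldsymbol{\sigma}')\in K$, and when neither winner lies in $K$ the two winners coincide. The ``if'' directions in Definition~\ref{def:ioc} then drop out by contrapositive from the ``only if'' directions just established. The main obstacle is the cardinality bookkeeping in the bulk step---specifically, arguing that the $F'_j$ promised by Condition 1 must be the \emph{final} list $F'_{m-|K|}$ so that Condition 4 can be invoked to read off the winner.
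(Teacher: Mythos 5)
Your proposal is correct and follows essentially the same route as the paper's proof: both arguments pivot on the fact that the profile with all but one clone removed is common to the two elections being compared, apply Condition $\boxed{1}$ at the final index, use a cardinality count to force the promised $F'_j$ to be the final list, and then read off the winner via Condition $\boxed{4}$. The only difference is organizational --- you state the ``bulk removal'' identity directly and phrase IoC as clone deletion, while the paper phrases it as clone addition and argues the two cases by contradiction --- which makes your write-up somewhat cleaner but not substantively different.
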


Most real-life elections do not result in ties. If we instead require agreement between $f$ and $GLOC_f$ only when there is a clear winner, we get a natural relaxation of OIoC:
  \begin{definition}\label{def:woioc}
      A voting rule $f$ is \textbf{weakly Obviously Independent of Clones (wOIoC)}  if given any  $\boldsymbol{\sigma}$ and any clone partition $\mathcal{K}$ w.r.t. $\boldsymbol{\sigma}$,  $f(\boldsymbol{\sigma}) = \{a\}$ iff $GLOC_f(\boldsymbol{\sigma}, \mathcal{K})=\{a\}$
\end{definition}

OIoC clearly implies wOIoC. The relationship between wOIoC and IoC is more nuanced: they are incomparable; however, (as discussed in the full version of the paper), wOIoC implies IoC under some reasonable assumptions about the voting rule. Having established the definitions of OIoC and wOIoC, we prove which rule satisfies which:
\begin{theorem}
    STV, the Schulze method, and SAV are not (even weakly) OIoC. Schwartz rule is wOIoC. Ranked pairs is OIoC.
\end{theorem}
Of the five results, the most sophisticated is  RP being OIoC. The proof depends on \textit{impartial tie-breaking}, defined in \cite{tideman:CompletIoC}, which is required for RP to always satisfy IoC.

\section{Conclusion \& Future Work} \label{conc}

Definition \ref{def:oioc} has a natural interpretation: if a voting rule is OIoC, then the outcome of the election will be same regardless of whether we (1) apply the rule directly or we (2) let the parties (clone sets) run primaries (pick their `best' member) and run the election among these winners. Apart from this consistency result, it also has practical implications: if a rule is OIoC, the decision of a candidate to opt-out of an election can be postponed until after the winning parties are computed, hence removing any concern over a candidacy resulting in the loss of their party. Additionally, OIoC allows cutting expenses by eliminating primaries. Without primaries, OIoC rules can also derive clone sets a posteriori from the votes, rather than assuming a political party to be a clone set.

This paper opens several new lines of work. For instance, one could study the problem of extending other axioms from social choice (such as monotonicity or independence of irrelevant alternatives) to fit the framework of obviousness.  More broadly, studying obviousness not only from a computational perspective but also an empirical or psychological point of view may shed light on how best to approach defining the obviousness of other axiomatic properties.

 \bibliographystyle{splncs04}
 \bibliography{refs}

\end{document}